\documentclass[12pt]{amsart}
\usepackage{a4wide}
\usepackage{amssymb}
\usepackage{bbm}

\newtheorem{theorem}{Theorem}[section]
\newtheorem{lemma}[theorem]{Lemma}

\theoremstyle{definition}
\newtheorem{definition}[theorem]{Definition}
\newtheorem{remark}[theorem]{Remark}

\numberwithin{equation}{section}

\begin{document}

\title[Portfolio choice under monotone preferences]{Continuous time portfolio choice under monotone preferences with quadratic penalty - stochastic interest rate case}

\author{Jakub Trybu{\l}a}

\address{\noindent Jakub Trybu{\l}a, \newline \indent Institute of Mathematics \newline \indent Faculty of Mathematics and Computer Science \newline \indent  Jagiellonian University in Krakow \newline \indent{\L}ojasiewicza  6 \newline \indent 30-348 Krak{\'o}w, Poland}

\email{jakub.trybula@im.uj.edu.pl}

\author{Dariusz Zawisza}
\address{\noindent Dariusz Zawisza, \newline \indent Institute of Mathematics \newline \indent Faculty of Mathematics and Computer Science \newline \indent  Jagiellonian University in Krakow \newline \indent{\L}ojasiewicza  6 \newline \indent 30-348 Krak{\'o}w, Poland }

\email{dariusz.zawisza@im.uj.edu.pl}

\subjclass[2010]{91G10; 91G30; 91A15; 93E20}

\keywords{Stochastic interest rate, stochastic control, stochastic games.}

\begin{abstract} This is a follow up of our previous paper - Trybu{\l}a and Zawisza \cite{TryZaw}, where we considered a modification of a monotone mean-variance functional in continuous time in stochastic factor model. In this article we address the problem of optimizing the mentioned functional in a market with a stochastic interest rate. We formulate it as a stochastic differential game problem and use Hamilton-Jacobi-Bellman-Isaacs equations to derive the optimal investment strategy and the value function. 
\end{abstract}

\maketitle

\section{Introduction}\label{sec:0} 
Mean-variance analysis, introduced by Markowitz \cite{Markowitz}, has long been a popular approach to determine the structure and composition of an optimal portfolio. Nevertheless, it is well known that mean-variance functional is not monotone and this is a serious drawback. Namely, Maccheroni et al. \cite{Maccheroni} gave a simple example when an investor with mean-variance preferences may strictly prefer less to more, thus violating one of the most compelling principles of economic rationality. For this reason, they created a new class of monotone preferences that coincide with mean-variance preferences on their domain of monotonicity, but differ where mean-variance preferences fail to be monotone and are therefore not economically meaningful.

A modification of Maccheroni type objective function has been first studied in detail in a dynamic optimization framework  by Trybu{\l}a and Zawisza \cite{TryZaw}. They showed that the solution to the problem of optimizing the mentioned functional in stochastic factor model coincides with the solution to classical Markowitz problem with a suitably chosen risk aversion coefficient. 

In this work we assume that an investor has access to the market, where he can freely invest in a bank account and a risky asset. Moreover, we suppose that the interest rate is given by a solution to a stochastic differential equation driven by one dimensional Brownian motion. The purpose is to describe an optimal financial strategy which an investor can follow in order to maximize his performance criterion which is given by the modification of the monotone mean-variance functional from Trybu{\l}a and Zawisza \cite{TryZaw}. 

The paper is structured in the same manner as mentioned article but the problem here is much harder to solve and results cannot be obtained using direct reasoning from that work. 

For literature review about finite horizon max-min problems we refer to
 Bordigoni et al. \cite{bordigoni}, Hern\'{a}ndez and  Schied \cite{Schied},  Mataramvura and {\O}ksendal \cite{Mataramvura},  {\O}ksendal and Sulem \cite{Oksendal3}, Trybu{\l}a and Zawisza \cite{TryZaw} and Zawisza \cite{Zawisza1}.

For interesting discussion about optimal investment in the presence of stochastic interest rate see for instance Bielecki and  Pliska \cite{Bilecki1} and \cite{Bielecki}, Brennan and  Xia \cite{Brennan}, Korn and Kraft \cite{Korn}, Munk and  Sorensen \cite{Munk} and \cite{Munk2}. It is worth mentioning here the paper of Flor and Larsen \cite{Flor}, where robust utility maximization with homothetic penalty function is considered.

\section{General model description}\label{sec:1} 
 
Let $(\Omega, \mathcal{F}, P)$ be a probability space with a filtration $(\mathcal{F}_{t}, 0 \leq t \leq T)$ possibly enlarged to satisfy usual assumptions and generated by Brownian motion $(W_{t}, 0 \leq t \leq T )$ defined on $(\Omega, \mathcal{F}, P)$. Suppose that an investor has access to the market with a bank account  $(B_{t},s \leq t \leq T)$ and a risky asset $(S_{t},s \leq t \leq T)$. Moreover, we assume that the interest rate is given by a stochastic process $(r_{t},s \leq t \leq T)$ and particulary we will consider the case of the Vasicek model for short rate. Processes mentioned above are solutions of the following system of stochastic differential equations
\begin{equation}\label{model}
\left\{
\begin{aligned}
dB_{t}&=r_{t} B_{t} dt, \\
dS_{t}&=(r_{t} + \lambda(r_{t}) \sigma(r_{t},t)) S_{t}  dt +  \sigma(r_{t},t) S_{t}  dW_{t},   \\
dr_{t}&=\bar{\mu}(r_{t}) dt + \bar{\sigma}(r_{t}) dW_{t},\qquad r_{s}=r>0,
\end{aligned}
\right.
\end{equation}
where the coefficients $\sigma>0$, $\lambda$, $\bar{\mu}$, $\bar{\sigma}$ are continuous functions and they are assumed to satisfy all the required regularity conditions, in order to guarantee that the unique strong solution to (\ref{model}) exists. For notational convenience and in order to ensure that under general conditions there exists a solution to equation (\ref{second_equation}) we avoid putting "t"-dependence in model coefficients. The exception is in the case of function $\sigma$,  since the risky asset $S_{t}$ has a natural interpretation as a price of bond in a short term one factor interest rate models. 

It is still possible to generalize this model by adding additional Brownian motion to the dynamics of $S_{t}$, however we are not able to solve the resulting HJBI equation, which is much extended even in current problem.

The rest of the setting is exactly the same as in Trybu{\l}a and Zawisza \cite{TryZaw}. Namely,
we will consider the class of $P$-equivalent measures
\[
\mathcal{Q}:=\left\{Q\sim P \colon \frac{dQ}{dP} =\mathcal{E} \left( \int \eta_{t}dW_{t}\right)_{T}\; , \quad \eta \in \mathcal{M}\right\},
\]
where $\mathcal{E}(\cdot)_{t}$ denotes the Doleans-Dade exponential and $\mathcal{M}$ is the set of all progressively measurable processes $\eta$ taking values in $\mathbb{R}$, such that
\[
\mathbb{E}\left(\frac{dQ^{\eta}}{dP}\right)^{2} < +\infty \quad \text{ and } \quad \mathbb{E}\left(\frac{dQ^{\eta}}{dP}\right) =1,
\]
where $Q^{\eta}$ denotes the measure determined by $\eta\in \mathcal{M}$. Moreover, let us define additional family of stochastic processes $(Y_{t}^{\eta}, s\leq t\leq T)$ which are given by the stochastic differential equations
\[
dY_{t}^{\eta}=\eta_{t}Y_{t}^{\eta}dW_{t},\qquad Y_{s}^{\eta}=y>0,\qquad \eta\in\mathcal{M}.
\]
Then, notice that
\[
Y_{T}^{\eta}=y\frac{dQ^{\eta}}{dP},\qquad \eta\in\mathcal{M}.
\]

Let $(X^{\pi}_{t}, s \leq t \leq T )$ be the investors wealth process with the following dynamics
\begin{equation}\label{wealth}
 dX_{t}^{\pi}=\left(\pi_{t}\lambda(r_{t})\sigma(r_{t},t)+r_{t}X_{t}^{\pi}\right)dt+\pi_{t}\sigma(r_{t},t)dW_{t},\qquad X_{s}^{\pi}=x>0,
\end{equation}
where $x$ denotes a current wealth of the investor, whereas a control $\pi_{t}$ we can interpret as a part of wealth invested in $S_{t}$. Note that $\pi_{t}$ as well as the portfolio wealth $X_{T}^{\pi}$ are allowed to be negative. 

\begin{definition} \label{defi1} A control (or strategy) $\pi=(\pi_{s}, t \leq s \leq T)$ is admissible on the time interval $[t,T]$, written $\pi \in \mathcal{A}_{x,y,r,t}$, if it satisfies the following assumptions:
\begin{enumerate}
   \item[(i)] $\pi$ is progressively measurable;
   \item[(ii)] unique solution to (\ref{wealth}) exists and
    \[
     \mathbb{E}^{\eta}_{x,y,r,t} \left[\sup_{t \leq s \leq T} |X_{s}^{\pi}|\right] < +\infty\qquad \text{for all}\ \eta \in \mathcal{M},
    \]
		where $\mathbb{E}^{\eta}$ denotes the expectation with respect to measure $Q^{\eta}$.
\end{enumerate}
\end{definition}

\subsection*{Formulation of the problem}\label{sec:2} 

We consider Maccheroni type objective function
\[
J^{\pi,\eta}(x,y,r,t):=
\mathbb{E}^{\eta}_{x,y,r,t}\left[-X_{T}^{\pi}\right]-  y\mathbb{E} \left[\frac{dQ^{\eta}}{dP}\right]^{2}=
\mathbb{E}^{\eta}_{x,y,r,t}\left[-X_{T}^{\pi}-Y_{T}^{\eta}\right].
\]
The investors aim is to
\begin{equation} \label{problem1}
 \text{minimize} \quad \sup_{\eta \in \mathcal{M}}
J^{\pi,\eta}(x,y,r,t)
\end{equation}
over a class of admissible strategies $\mathcal{A}_{x,y,r,t}$.

As usually, the problem (\ref{problem1}) might be considered as a zero-sum stochastic differential game problem.  We are looking for a saddle point  $(\pi^{*},\eta^{*})\in\mathcal{A}_{x,y,r,t}\times\mathcal{M}$ and a value function $V(x,y,r,t)$ such that
\[
 J^{\pi^{*},\eta} (x,y,r,t) \leqslant J^{\pi^{*},\eta^{*}} (x,y,r,t) \leqslant  J^{\pi,\eta^{*}} (x,y,r,t)
\]
and
\[ 
V(x,y,r,t)=J^{\pi^{*},\eta^{*}}(x,y,r,t).
\]

\subsection*{The verification theorem.}\label{sec:3} 

As announced we solve the problem by applying stochastic control theory. Let us remind, that
\begin{equation}\label{model1}
\left\{
\begin{aligned}
dX_{t}^{\pi}&=\left(\pi_{t}\lambda(r_{t})\sigma(r_{t},t)+r_{t}X_{t}^{\pi}\right)dt+\pi_{t}\sigma(r_{t},t)dW_{t},  \\
dY_{t}^{\eta}&=\eta_{t}Y_{t}^{\eta} dW_{t},   \\
dr_{t}&=\bar{\mu}(r_{t}) dt + \bar{\sigma}(r_{t}) dW_{t}.
\end{aligned}
\right.
\end{equation}
It is convenient to consider $Q^{\eta}$-dynamics of system (\ref{model1}). After applaying the Girsanov transformation, we have
\[
\left\{
\begin{aligned}
 dX_{t}^{\pi}&= \left(\pi_{t}\sigma(r_{t},t)(\lambda(r_{t})+\eta_{t})+r_{t}X_{t}^{\pi}\right)dt+\pi_{t}\sigma(r_{t},t)dW_{t}^{\eta}, \\
 dY_{t}^{\eta}&=  \eta_{t}^{2} Y_{t}^{\eta}dt+\eta_{t}Y_{t}^{\eta} dW_{t}^{\eta},\\
 dr_{t}&=\left(\bar{\mu}(r_{t})+\bar{\sigma}(r_{t})\eta_{t}\right) dt + \bar{\sigma}(r_{t}) dW_{t}^{\eta},
\end{aligned}
\right.
\]
where $(W_{t}^{\eta},0\leq t\leq T)$ is $Q^{\eta}$-Brownian motion defined as
\[
dW_{t}^{\eta} = dW_{t}-\eta_{t}dt.
\]
Let $\mathcal{L}^{\pi,\eta}$ be the differential operator given by
\begin{align*}
\mathcal{L}^{\pi,\eta}  V (x,y,r,t) :=&  V_{t}+\left(\pi  \sigma(r,t)\left(\lambda(r)+\eta\right)+rx\right)V_{x}+\eta^{2}yV_{y}+ \left(\bar{\mu}(r)+\bar{\sigma}(r)\eta\right)V_{r} \\
&+\frac{1}{2}\pi^{2}\sigma^{2}(r,t)V_{xx}+\frac{1}{2}\eta^{2}y^{2}V_{yy}+\frac{1}{2}\bar{\sigma}^{2}(r)V_{rr}\\
&+\pi\sigma(r,t)\eta yV_{xy}+\pi\sigma(r,t)\bar{\sigma}(r)V_{xr}+\eta \bar{\sigma}(r)yV_{yr}.
\end{align*}
We can now formulate the Verification Theorem. The proof of this theorem is exactly the same as the proof of analogous
theorem from Trybu{\l}a and Zawisza \cite{TryZaw}, so in this paper we omit it.
\begin{theorem}[Verification Theorem]\label{verification_theorem} Suppose there exists a function 
\[
V \in \mathcal{C}^{2,2,2,1}(\mathbb{R}\times(0,+\infty)\times\mathbb{R} \times [0,T)) \cap \mathcal{C} (\mathbb{R}\times[0,+\infty)\times\mathbb{R} \times [0,T])
\]
and a Markov control 
\[
(\pi^{*},\eta^{*})\in\mathcal{A}_{x,y,r,t}\times\mathcal{M},
\]
such that
\begin{align}
&\mathcal{L}^{\pi^{*}(x,y,r,t),\eta}V(x,y,r,t) \leq 0 \label{first:in1}, \\
&\mathcal{L}^{\pi,\eta^{*}(x,y,r,t)}V(x,y,r,t) \geq 0  \label{second:in1}, \\
&\mathcal{L}^{\pi^{*}(x,y,r,t),\eta^{*}(x,y,r,t)}V(x,y,r,t) = 0\label{third:eq1}, \\
&V(x,y,r,T)=-x-y \label{terminal:cond1}
\end{align}
\flushright for all $\eta \in \mathbb{R}$, $\pi \in \mathbb{R}$,  $(x,y,r,t) \in \mathbb{R}\times(0,+\infty)\times\mathbb{R} \times [0,T) $,
\flushleft
and
\begin{equation} \label{uniform}
 \mathbb{E}_{x,y,r,t}^{\eta} \left[ \sup_{t \leq s \leq T} \left|V(X_{s}^{\pi},Y_{s}^{\eta},r_{s},s)\right|\right] < + \infty 
\end{equation}
\flushright for all  $ (x,y,r,t) \in  \mathbb{R}\times[0,+\infty)\times\mathbb{R} \times [0,T]$,
$\pi \in \mathcal{A}_{x,y,r,t}$, $\eta \in \mathcal{M}$. \flushleft
\medskip

Then
\[J^{\pi^{*},\eta}(x,y,r,t) \leq V(x,y,r,t) \leq J^{\pi,\eta^{*}}(x,y,r,t)\]
\flushright
for all $\pi \in \mathcal{A}_{x,y,r,t}$, $\eta \in \mathcal{M}$,
\flushleft  and
\[ 
V(x,y,r,t)= J^{\pi^{*},\eta^{*}}(x,y,r,t). 
\]
\end{theorem}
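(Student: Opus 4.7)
\medskip
\textbf{Proof plan.}
The approach is the classical martingale-based verification argument adapted to the saddle-point setting, exactly as in Trybu\l a and Zawisza \cite{TryZaw}. The plan is to apply It\^o's formula to $V(X_{s}^{\pi},Y_{s}^{\eta},r_{s},s)$ along the $Q^{\eta}$-dynamics derived above, so that the absolutely continuous part of $dV$ becomes $\mathcal{L}^{\pi,\eta}V(X_{s}^{\pi},Y_{s}^{\eta},r_{s},s)\,ds$ and the martingale part is a stochastic integral with respect to the $Q^{\eta}$-Brownian motion $W^{\eta}$. A localizing sequence of stopping times $\tau_{n}\uparrow T$, for instance
\[
\tau_{n}:=T\wedge\inf\{s\geq t \colon |X_{s}^{\pi}|+Y_{s}^{\eta}+|r_{s}|\geq n\},
\]
will be used to turn the It\^o integrals on $[t,\tau_{n}]$ into genuine $Q^{\eta}$-martingales whose expectation vanishes.

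First I would prove the upper bound. Fix $\eta\in\mathcal{M}$ and choose $\pi=\pi^{*}$. Taking $Q^{\eta}$-expectation in the It\^o expansion and invoking (\ref{first:in1}) to drop the nonpositive drift contribution yields
\[
V(x,y,r,t)\geq \mathbb{E}^{\eta}_{x,y,r,t}\left[V(X_{\tau_{n}}^{\pi^{*}},Y_{\tau_{n}}^{\eta},r_{\tau_{n}},\tau_{n})\right].
\]
Letting $n\to\infty$, by continuity of $V$ on $\mathbb{R}\times[0,+\infty)\times\mathbb{R}\times[0,T]$ and the terminal condition (\ref{terminal:cond1}), the right-hand side converges pathwise to $-X_{T}^{\pi^{*}}-Y_{T}^{\eta}$; dominated convergence, justified by hypothesis (\ref{uniform}), interchanges limit and expectation and produces $V(x,y,r,t)\geq J^{\pi^{*},\eta}(x,y,r,t)$. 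The lower bound $V(x,y,r,t)\leq J^{\pi,\eta^{*}}(x,y,r,t)$ is obtained symmetrically by fixing $\pi\in\mathcal{A}_{x,y,r,t}$, taking $\eta=\eta^{*}$, and using (\ref{second:in1}) in place of (\ref{first:in1}).

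Finally, inserting both $\pi=\pi^{*}$ and $\eta=\eta^{*}$ into the same It\^o expansion makes the drift vanish identically by (\ref{third:eq1}); the localization--limit argument then upgrades the two one-sided estimates to the equality $V(x,y,r,t)=J^{\pi^{*},\eta^{*}}(x,y,r,t)$. Combined with the previously established inequalities this is precisely the saddle-point property stated in the theorem.

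The main technical obstacle is the passage from the stopped expectations at $\tau_{n}$ to the terminal expectations at $T$: the It\^o integrals are only local $Q^{\eta}$-martingales a priori, and the stopped process values $V(X_{\tau_{n}}^{\pi^{*}},Y_{\tau_{n}}^{\eta},r_{\tau_{n}},\tau_{n})$ need a uniformly integrable majorant under $Q^{\eta}$ in order to swap limit and expectation. This is exactly the role of assumption (\ref{uniform}), and its invocation via dominated convergence closes the argument.
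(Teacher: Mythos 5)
Your plan is the standard Itô--localization--dominated convergence verification argument, which is precisely the proof the paper omits by reference to the analogous theorem in Trybu{\l}a and Zawisza \cite{TryZaw}; the roles you assign to (\ref{first:in1})--(\ref{third:eq1}) as drift-sign conditions and to (\ref{uniform}) as the uniformly integrable majorant are all correct. No gaps; this is essentially the intended proof.
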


\subsection*{Solution to the minimax problem}\label{sec:4} 

To find the saddle point we start with analizing a Hamilton-Jacobi-Bellman-Isaacs equation
\begin{equation} \label{upper_isaacs}
\min_{\pi \in \mathbb{R}} \max_{\eta \in \mathbb{R}} \mathcal{L}^{\pi,\eta} V(x,y,r,t)=0,
\end{equation}
i.e.
\begin{align*}
V_{t}+rxV_{x}+&\bar{\mu}(r)V_{r}+\frac{1}{2}\bar{\sigma}^{2}(r)V_{rr}\\
+\min_{\pi\in\mathbb{R}}\max_{\eta\in\mathbb{R}}&\biggl\lbrace  \pi  \sigma(r,t)\left(\lambda(r)+\eta\right)V_{x}+\eta^{2}yV_{y}+ \bar{\sigma}(r)\eta V_{r}+\frac{1}{2}\pi^{2}\sigma^{2}(r)V_{xx}\\
&+\frac{1}{2}\eta^{2}y^{2}V_{yy}+\pi\sigma(r,t)\eta y V_{xy}+\pi\sigma(r,t)\bar{\sigma}(r)V_{xr}+\eta \bar{\sigma}(r)yV_{yr}\biggl\rbrace=0.
\end{align*}
We expect $V(x,y,r,t)$ to be of the form
\begin{equation} \label{given}
V(x,y,r,t)=H(r,t)x+G(r,t)y,
\end{equation}
where
\[
H(r,T)=-1\quad\text{and}\quad G(r,T)=-1.
\]
Then we have
\begin{align*}
xH_{t}+yG_{t}&+rxH+\bar{\mu}(r)\left(xH_{r}+yG_{r}\right)+\frac{1}{2}\bar{\sigma}^{2}(r)\left(xH_{rr}+yG_{rr}\right)\\
+\min_{\pi\in\mathbb{R}}\max_{\eta\in\mathbb{R}}&\biggl\lbrace \pi\sigma(r,t)(\lambda(r)+\eta)H+\eta^{2}yG\\
&+\bar{\sigma}(r)\left(\eta x +\pi\sigma(r,t)\right)H_{r}+2\eta \bar{\sigma}(r)yG_{r}\biggl\rbrace=0.
\end{align*}
The maximum over $\eta$ is attained at $\eta^{*}(\pi)$, where
\[
\eta^{*}(\pi)=-\frac{\sigma(r,t)H}{2yG}\pi-\frac{\bar{\sigma}(r)(xH_{r}+2yG_{r})}{2yG}.
\]
For $\eta^{*}(\pi)$ our equation is of the form
\begin{align}
xH_{t}+&yG_{t}+rxH+\bar{\mu}(r)\left(xH_{r}+yG_{r}\right)+\frac{1}{2}\bar{\sigma}^{2}(r)\left(xH_{rr}+yG_{rr}\right)\notag \\
+\min_{\pi\in\mathbb{R}}&\biggl\lbrace \pi\sigma(r,t)(\lambda(r)+\eta^{*}(\pi))H+\left(\eta^{*}(\pi)\right)^{2}yG\label{eq2}\\
&+\bar{\sigma}(r)\left(\eta^{*}(\pi) x+\pi\sigma(r,t)\right)H_{r}+2\eta^{*}(\pi) \bar{\sigma}(r)yG_{r}\biggl\rbrace=0.\notag 
\end{align}
The minimum over $\pi$ is attained at
\begin{equation}\label{pistar}
\pi^{*}=2yG\left[\frac{\lambda(r)}{\sigma(r,t)}\frac{1}{H}+\frac{\bar{\sigma}(r)}{\sigma(r,t)}\left(\frac{H_{r}}{H^{2}}-\frac{G_{r}}{GH}\right)\right]-x\frac{\bar{\sigma}(r)}{\sigma(r,t)}\frac{H_{r}}{H}.
\end{equation}
It is worth to notice here that 
\begin{equation}\label{eta_pi}
\eta^{*}(\pi^{*})=-\lambda(r)-\bar{\sigma}(r)\frac{H_{r}}{H},
\end{equation}
so the saddle point candidate 
\begin{equation}\label{saddle_point_candidate}
\left(\pi^{*},\eta^{*}(\pi^{*})\right)
\end{equation}
looks as follows
\begin{align*}
\pi^{*}&=2yG\left[\frac{\lambda(r)}{\sigma(r,t)}\frac{1}{H}+\frac{\bar{\sigma}(r)}{\sigma(r,t)}\left(\frac{H_{r}}{H^{2}}-\frac{G_{r}}{GH}\right)\right]-x\frac{\bar{\sigma}(r)}{\sigma(r,t)}\frac{H_{r}}{H}  ,\\
\eta^{*}(\pi^{*})& =-\lambda(r)-\bar{\sigma}(r)\frac{H_{r}}{H}.
\end{align*}
Now we substitute (\ref{pistar}) into (\ref{eq2}) and get the final equation of the form
\begin{align*}
x\biggl[H_{t}+rH&+\left(\bar{\mu}(r)-\bar{\sigma}(r)\lambda(r)\right)H_{r}+\frac{1}{2}\bar{\sigma}^{2}(r)H_{rr}-\bar{\sigma}^{2}(r)\frac{H^{2}_{r}}{H}\biggr] \\
+y\biggl[&G_{t}+\frac{1}{2}\bar{\sigma}^{2}(r)G_{rr}+\left(\lambda(r)+\bar{\sigma}(r)\frac{H_{r}}{H}\right)^{2}G  \\
&+\left(\bar{\mu}(r)-2\bar{\sigma}(r)\lambda(r)-2\bar{\sigma}^{2}(r)\frac{H_{r}}{H}\right)G_{r}\biggr]=0.
\end{align*}
Therefore, instead of solving completely nonlinear equations it is sufficient to find a classical (class $\mathcal{C}^{2,1}$) unique solutions for two semilinear equations:
\begin{equation}\label{first_equation}
H_{t}+rH+\left(\bar{\mu}(r)-\bar{\sigma}(r)\lambda(r)\right)H_{r}+\frac{1}{2}\bar{\sigma}^{2}(r)H_{rr}-\bar{\sigma}^{2}(r)\frac{H^{2}_{r}}{H}=0,
\end{equation}
with terminal condition $H(r,T)=-1$ and 
\begin{align}
\begin{split}\label{second_equation}
    &G_{t}+\frac{1}{2}\bar{\sigma}^{2}(r)G_{rr}+\left(\lambda(r)+\bar{\sigma}(r)\frac{H_{r}}{H}\right)^{2}G\\
         &+\left(\bar{\mu}(r)-2\bar{\sigma}(r)\lambda(r)-2\bar{\sigma}^{2}(r)\frac{H_{r}}{H}\right)G_{r}=0,
\end{split}
\end{align}
with terminal condition $G(r,T)=-1$. We will back to these two equations in Section \ref{sec:5}.

Now we need to prove two usefull lemmas. 

\begin{lemma}\label{minimax_equalities_lemma}
Suppose that function $V  \in \mathcal{C}^{2,2,2,1}(\mathbb{R}\times(0,+\infty)\times\mathbb{R} \times [0,T))$ given by (\ref{given}) is a classical unique solution to (\ref{upper_isaacs}). Moreover, let $\left(\pi^{*},\eta^{*}(\pi^{*})\right)\in\mathcal{A}_{x,y,r,t}\times\mathcal{M}$ be determined using (\ref{saddle_point_candidate}). Then conditions (\ref{first:in1}) - (\ref{terminal:cond1}) of Theorem \ref{verification_theorem} are satisfied.
\end{lemma}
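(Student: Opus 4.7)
The terminal condition and the equality (\ref{third:eq1}) are immediate. Since $V(x,y,r,t)=H(r,t)x+G(r,t)y$, the terminal values $H(r,T)=G(r,T)=-1$ yield $V(x,y,r,T)=-x-y$, which is (\ref{terminal:cond1}). For (\ref{third:eq1}), I would plug $(\pi^*,\eta^*(\pi^*))$ into $\mathcal{L}^{\pi,\eta}V$: the coefficients of $x$ and $y$ in the resulting expression are precisely the left-hand sides of (\ref{first_equation}) and (\ref{second_equation}), which vanish by construction.

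The inequality (\ref{second:in1}) is essentially free once one notices the simple algebraic structure. Because $V$ is affine in $(x,y)$, all second-order derivatives $V_{xx}$, $V_{yy}$, $V_{xy}$ vanish. Hence, for each fixed $\eta$, the map $\pi\mapsto\mathcal{L}^{\pi,\eta}V$ is \emph{affine} in $\pi$, with slope $\sigma(r,t)(\lambda(r)+\eta)H+\sigma(r,t)\bar\sigma(r)H_r$. Substituting $\eta=\eta^*(\pi^*)=-\lambda(r)-\bar\sigma(r)H_r/H$ from (\ref{eta_pi}) makes this slope equal to $0$, so $\mathcal{L}^{\pi,\eta^*(\pi^*)}V$ is in fact independent of $\pi$ and therefore coincides with $\mathcal{L}^{\pi^*,\eta^*(\pi^*)}V=0$, giving (\ref{second:in1}).

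For (\ref{first:in1}), observe that for fixed $\pi^*$ the map $\eta\mapsto\mathcal{L}^{\pi^*,\eta}V$ is a quadratic polynomial in $\eta$ with leading coefficient $yG(r,t)$, since $V_y=G$, $V_{yy}=0$ and $V_{xy}=0$. If this quadratic is concave, i.e.\ if $G(r,t)<0$, then its maximum over $\eta\in\mathbb{R}$ is attained at the unique critical point, which by construction is $\eta^*(\pi^*)$, and the maximum value equals $\mathcal{L}^{\pi^*,\eta^*(\pi^*)}V=0$. This gives $\mathcal{L}^{\pi^*,\eta}V\leq0$ for every $\eta\in\mathbb{R}$.

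The main obstacle is therefore establishing the sign condition $G(r,t)<0$ on $\mathbb{R}\times[0,T)$. Equation (\ref{second_equation}) is \emph{linear} in $G$, of Feynman--Kac type, with zeroth-order coefficient $(\lambda(r)+\bar\sigma(r)H_r/H)^{2}\geq0$ and terminal datum $G(\cdot,T)=-1$. Assuming enough regularity on $H$ (which is part of the standing hypotheses of the lemma: $H$ is a classical solution of (\ref{first_equation})) so that the associated auxiliary SDE
\[
d\tilde r_s=\bigl(\bar\mu(\tilde r_s)-2\bar\sigma(\tilde r_s)\lambda(\tilde r_s)-2\bar\sigma^2(\tilde r_s)\tfrac{H_r}{H}(\tilde r_s,s)\bigr)ds+\bar\sigma(\tilde r_s)d\tilde W_s
\]
has a unique strong solution with sufficient integrability, the Feynman--Kac representation yields
\[
G(r,t)=-\,\mathbb{E}_{r,t}\!\left[\exp\!\left(\int_t^T\!\bigl(\lambda(\tilde r_s)+\bar\sigma(\tilde r_s)\tfrac{H_r}{H}(\tilde r_s,s)\bigr)^{2}\,ds\right)\right]\leq -1<0,
\]
which in particular implies $yG(r,t)<0$ for $y>0$, completing the concavity argument above and hence the proof.
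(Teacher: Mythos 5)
Your proposal is correct, and it follows the same basic route as the paper's (very terse) proof: verify the terminal condition and (\ref{third:eq1}) directly, obtain (\ref{second:in1}) from the fact that $\pi\mapsto\mathcal{L}^{\pi,\eta^{*}(\pi^{*})}V$ is affine with vanishing slope (this is exactly what the paper means by ``use (\ref{eq2}) and (\ref{eta_pi}) and simply verify''), and obtain (\ref{first:in1}) from the first-order condition defining $\eta^{*}(\pi)$. The genuine added value of your write-up is that you make explicit the second-order condition that the paper suppresses: since $V_{yy}=V_{xy}=0$, the map $\eta\mapsto\mathcal{L}^{\pi^{*},\eta}V$ is a quadratic with leading coefficient $yG$, so the critical point $\eta^{*}(\pi^{*})$ is a global maximum only if $G<0$ (and, symmetrically, the coefficient of $\pi^{2}$ in (\ref{eq2}) is $-\sigma^{2}H^{2}/(4yG)$, so the same sign condition makes $\pi^{*}$ a genuine minimizer). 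The paper never checks this inside the lemma; it is only justified later, in Remark \ref{im_re}, where the Feynman--Kac representation gives $G\leq-1$ under the regularity hypotheses of Section \ref{sec:5}. Your Feynman--Kac argument for $G<0$ is the right one, with the caveat you already note: it needs the boundedness and Lipschitz continuity of $H_{r}/H$ (Lemma \ref{derivative}), which are not literally part of the hypotheses of Lemma \ref{minimax_equalities_lemma} as stated. One could alternatively observe that the hypothesis that $V$ of the form (\ref{given}) solves (\ref{upper_isaacs}) already forces $yG<0$, since otherwise the inner supremum over $\eta$ would be $+\infty$; either way, your identification of this sign condition as the only nontrivial point is exactly right.
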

\begin{proof}
We already know that 
\[
\max_{\eta \in \mathbb{R}}  \mathcal{L}^{\pi^{*},\eta} V(x,y,r,t) = 0,\qquad \mathcal{L}^{\pi^{*},\eta^{*}} V(x,y,r,t) = 0
\]
and 
\[
  V(x,y,r,T)=-x-y,
\]
which confirms (\ref{first:in1}), (\ref{third:eq1}) and (\ref{terminal:cond1}).

To prove (\ref{second:in1}) it is sufficient to use (\ref{eq2}) and (\ref{eta_pi}) and simply verify that 
\[
 \min_{\pi \in \mathbb{R}} \mathcal{L}^{\pi,\eta^{*}(\pi^{*})} V(x,y,r,t)= 0.
\]
\end{proof}

The second lemma will be helpful in Section~\ref{sec:6} to prove the main result.

\begin{lemma} \label{lem_reduction}
Suppose that initial conditions $(x_{0},y_{0},r_{0},t_{0})$ are fixed, the saddle point
\[
\left(\pi^{*},\eta^{*}(\pi^{*})\right)\in \mathcal{A}_{x_{0},y_{0},r_{0},t_{0}}\times\mathcal{M}
\]
is given by (\ref{saddle_point_candidate}), $H(r,t)$ and $G(r,t)$ are a classical unique solutions to equations (\ref{first_equation}) and (\ref{second_equation}) respectively. Then
\[
2G(r_{t},t)Y^{\eta^{*}}_{t}=2G(r_{0},t_{0})y_{0}+H(r_{0},t_{0})x_{0}-H(r_{t},t)X^{\pi^{*}}_{t},\qquad \forall t\in [t_{0},T].
\]
\end{lemma}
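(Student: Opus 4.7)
The plan is to show that the process
\[
M_t := H(r_t, t) X^{\pi^*}_t + 2 G(r_t, t) Y^{\eta^*}_t
\]
is constant on $[t_0, T]$; evaluating $M_t = M_{t_0}$ and rearranging then yields the claim. I would apply It\^o's formula to $M_t$ using the $P$-dynamics (\ref{model1}) and verify that its drift and diffusion coefficients vanish separately and identically.

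For the diffusion, a direct calculation gives that $d(H X^{\pi^*})$ has diffusion coefficient $H \sigma \pi^* + \bar{\sigma} H_r X^{\pi^*}$ and $2\,d(G Y^{\eta^*})$ has diffusion coefficient $2 G \eta^* Y^{\eta^*} + 2 \bar{\sigma} G_r Y^{\eta^*}$. Substituting the explicit formulae (\ref{pistar}) for $\pi^*$ and (\ref{eta_pi}) for $\eta^*(\pi^*)$, the terms proportional to $X^{\pi^*}$ cancel against each other, and likewise for the terms proportional to $Y^{\eta^*}$, so the total diffusion coefficient of $dM_t$ is zero.

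For the drift, I would group the terms by whether they are linear in $X^{\pi^*}$ or linear in $Y^{\eta^*}$, noting that (\ref{pistar}) splits $\pi^*$ itself into an $x$-piece and a $y$-piece. After substitution, the coefficient of $X^{\pi^*}$ reduces exactly to the left-hand side of the semilinear PDE (\ref{first_equation}) and therefore vanishes, while the coefficient of $Y^{\eta^*}$ reduces to $2$ times the left-hand side of (\ref{second_equation}) and vanishes as well. Hence $dM_t \equiv 0$ on $[t_0, T]$, so $M_t = M_{t_0}$, which is the asserted identity after rearrangement.

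The only real obstacle is the algebraic bookkeeping: numerous cross-terms involving $\lambda$, $\bar{\sigma} H_r/H$, and $\bar{\sigma} G_r/G$ have to be collected correctly in order to recognise the precise combinations appearing in (\ref{first_equation}) and (\ref{second_equation}). There is no analytic subtlety, since the admissibility $(\pi^*, \eta^*(\pi^*)) \in \mathcal{A}_{x_0,y_0,r_0,t_0} \times \mathcal{M}$ and the $\mathcal{C}^{2,1}$-regularity of $H$ and $G$ assumed in the statement are sufficient to justify It\^o's formula and the pathwise manipulations on $[t_0, T]$.
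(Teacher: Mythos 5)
Your proposal is correct and follows essentially the same route as the paper: the paper applies It\^o's formula to $H(r_t,t)X^{\pi^*}_t$ and $-2G(r_t,t)Y^{\eta^*}_t$ separately and checks, using (\ref{first_equation}) and (\ref{second_equation}), that their differentials coincide, which is exactly your statement that $dM_t=0$. Your identifications of the drift coefficients (the $X$-coefficient being the left-hand side of (\ref{first_equation}) and the $Y$-coefficient being twice that of (\ref{second_equation})) check out.
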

\begin{proof}
It is sufficient to prove only that 
\[
d\left(H(r_{t},t)X^{\pi^{*}}_{t}\right)=d\left(-2G(r_{t},t)Y^{\eta^{*}}_{t}\right).
\]
First of all, note that for saddle point given by (\ref{saddle_point_candidate}) system of equations (\ref{model1}) is of the form
\begin{align*}
dX_{t}^{\pi^{*}}=&\left\{2Y_{t}^{\eta^{*}}G\left[\frac{\lambda^{2}(r_{t})}{H}+\bar{\sigma}(r_{t})\lambda(r_{t})\left(\frac{H_{r}}{H^{2}}-\frac{G_{r}}{GH}\right)\right]-X_{t}^{\pi^{*}}\lambda(r_{t})\bar{\sigma}(r_{t})\frac{H_{r}}{H}+r_{t}X_{t}^{\pi^{*}}\right\}dt\notag \\
&+\left\{2Y_{t}^{\eta^{*}}G\left[\frac{\lambda(r_{t})}{H}+\bar{\sigma}(r_{t})\left(\frac{H_{r}}{H^{2}}-\frac{G_{r}}{GH}\right)\right]-X_{t}^{\pi^{*}}\bar{\sigma}(r_{t})\frac{H_{r}}{H}\right\}dW_{t}
\end{align*}
and
\[
dY^{\eta^{*}}_{t} =-\left\{\lambda(r_{t})+\bar{\sigma}(r_{t})\frac{H_{r}}{H}\right\}Y^{\eta^{*}}_{t} dW_{t}.
\]
Using (\ref{first_equation}) we can verify that
\[
dH(r_{t},t)=\left[-r_{t}H(r_{t},t)+\bar{\sigma}(r_{t})\lambda(r_{t})H_{r}(r_{t},t)+\bar{\sigma}^{2}(r_{t})\frac{H^{2}_{r}(r_{t},t)}{H(r_{t},t)}\right]dt+\bar{\sigma}(r_{t})H_{r}(r_{t},t)dW_{t}. 
\]
Moreover, we have 
\[
d\left(H(r_{t},t)X_{t}^{\pi^{*}}\right)=H(r_{t},t)dX_{t}^{\pi^{*}}+X_{t}^{\pi^{*}}dH(r_{t},t)+dH(r_{t},t)dX_{t}^{\pi^{*}},
\]
so substituting the appropriate dynamics to above equation we get
\begin{align}\label{aa:1}
d(H(r_{t},t)X_{t}^{\pi^{*}})=2Y_{t}^{\eta^{*}}&\left\{\left[\left(\lambda(r_{t})+\bar{\sigma}(r_{t})\frac{H_{r}(r_{t},t)}{H(r_{t},t)}\right)^{2}G(r_{t},t)\right.\right.\notag \\
&\left.\left.-\bar{\sigma}(r_{t})\left(\lambda(r_{t})+\bar{\sigma}(r_{t})\frac{H_{r}(r_{t},t)}{H(r_{t},t)}\right)G_{r}(r_{t},t)\right]\right. dt\\
&+\left.\left[\left(\lambda(r_{t})+\bar{\sigma}(r_{t})\frac{H_{r}(r_{t},t)}{H(r_{t},t)}\right)G(r_{t},t)-\bar{\sigma}(r_{t})G_{r}(r_{t},t)\right]dW_{t}\right\}\notag .
\end{align}
Now using (\ref{second_equation}) we can verify that
\begin{align*}
dG(r_{t},t)=&\left[2\bar{\sigma}(r_{t})\left(\lambda(r_{t})+\bar{\sigma}(r_{t})\frac{H_{r}(r_{t},t)}{H(r_{t},t)}\right)G_{r}(r_{t},t)\right. \\
&\left.-\left(\lambda(r_{t})+\bar{\sigma}(r_{t})\frac{H_{r}(r_{t},t)}{H(r_{t},t)}\right)^{2}G(r_{t},t)\right]dt+\bar{\sigma}(r_{t})G_{r}(r_{t},t)dW_{t}.
\end{align*}
Moreover, we have 
\[
d\left(-2G(r_{t},t)Y^{\eta^{*}}_{t}\right)=-2G(r_{t},t)dY^{\eta^{*}}_{t}-2Y^{\eta^{*}}_{t}dG(r_{t},t)-2dG(r_{t},t)dY^{\eta^{*}}_{t},
\]
so substituting the appropriate dynamics to above equation we get the right hand side of (\ref{aa:1}).
\end{proof}

\begin{remark}\label{reduction2}
Note that process $(Y^{\eta^{*}}_{t},t_{0}\leq t\leq T)$ is not directly observable but fortunately the above lemma ensures that for fixed initial conditions $(x_{0},y_{0},r_{0},t_{0})$ instead of Markov strategy 
\[
\pi^{*}=2Y_{t}^{\eta^{*}}G\left[\frac{\lambda(r_{t})}{\sigma(r_{t},t)}\frac{1}{H}+\frac{\bar{\sigma}(r_{t})}{\sigma(r_{t},t)}\left(\frac{H_{r}}{H^{2}}-\frac{G_{r}}{GH}\right)\right]-X_{t}^{\pi^{*}}\frac{\bar{\sigma}(r_{t})}{\sigma(r_{t},t)}\frac{H_{r}}{H},
\]
we can use
\begin{align*}
\hat{\pi}^{*}=&\bigl[2G(r_{0},t_{0})y_{0}+H(r_{0},t_{0})x_{0}-H(r_{t},t) X_{t}^{\pi^{*}}\bigr] \\
&*\left[\frac{\lambda(r_{t})}{\sigma(r_{t},t)}\frac{1}{H}+\frac{\bar{\sigma}(r_{t})}{\sigma(r_{t},t)}\left(\frac{H_{r}}{H^{2}}-\frac{G_{r}}{GH}\right)\right]-X_{t}^{\pi^{*}}\frac{\bar{\sigma}(r_{t})}{\sigma(r_{t},t)}\frac{H_{r}}{H}.
\end{align*}
\end{remark}

\begin{section}{Classical smooth solutions to resulting equations}\label{sec:5} 

In this section we give a set of assumptions to ensure existence of a classical (class $\mathcal{C}^{2,1}$) unique solutions to equations $(\ref{first_equation})$ and $(\ref{second_equation})$ with appropriate terminal conditions.

To solve equation (\ref{first_equation}) with boundary condition $H(r,T)=-1$ the following ansatz is made 
\[
H(r,t)=-\Gamma^{a}(r,t),\qquad\text{where}\qquad \Gamma(r,T)=1\qquad\text{and}\qquad \Gamma(r,t)>0,
\]
to obtain
\[
\Gamma_{t}+\frac{r}{a}\Gamma+\left(\bar{\mu}(r)-\bar{\sigma}(r)\lambda(r)\right)\Gamma_{r}+\frac{1}{2}\bar{\sigma}^{2}(r)\Gamma_{rr}+\left[\frac{1}{2}(a-1)-a\right]\bar{\sigma}^{2}(r)\frac{\Gamma^{2}_{r}}{\Gamma}=0.
\]
Note that for $a=-1$ we have
\begin{equation} \label{eqgamma}
\Gamma_{t}-r\Gamma+\left(\bar{\mu}(r)-\bar{\sigma}(r)\lambda(r)\right)\Gamma_{r}+\frac{1}{2}\bar{\sigma}^{2}(r)\Gamma_{rr}=0.
\end{equation}
Nevertheless, it is convenient to make another substitution. Namely, if we substitute
\[
\Gamma(r,t)=e^{-r(T-t)}F(r,t),\qquad\text{where}\qquad F(r,T)=1\qquad\text{and}\qquad F(r,t)>0,
\]
we get the following equation
\begin{equation}\label{final_final}
F_{t}+\left[\frac{1}{2}\bar{\sigma}^{2}(r)\left(T-t\right)^{2}-\left(T-t\right)\left(\bar{\mu}(r)-\bar{\sigma}(r)\lambda(r)\right)\right]F
\end{equation}
\[
+\left[\bar{\mu}(r)-\bar{\sigma}(r)\lambda(r)-\bar{\sigma}^{2}(r)(T-t)\right]F_{r}+\frac{1}{2}\bar{\sigma}^{2}(r)F_{rr}=0.
\]
Now we are going to prove existence of a classical (class $\mathcal{C}^{2,1}$) unique solution to equation (\ref{final_final}) with boundary condition $F(r,T)=1$.

\begin{remark}\label{important_remark} 
From Theorem 4.6, Chapter 6 of Friedman \cite{Friedman}, it follows that if $\bar{\mu}$, $\bar{\sigma}$ and $\bar{\sigma}\cdot\lambda$ are Lipschitz continuous and bounded, $\bar{\sigma}^{2}>\varepsilon>0$, then there exists a classical (class $\mathcal{C}^{2,1}(\mathbb{R} \times [0,T)) \cap\mathcal{C}(\mathbb{R} \times [0,T]) $) unique solution $F$ to equation (\ref{final_final}) which is bounded (see in addition estimate 4.12, Chapter 6 of Friedman \cite{Friedman}).  It is well known that such solution  satisfies the Feynman-Kac representation:
\[
F(r,t)=\mathbb{E}_{r,t}^{\tilde{P}}\left[ \exp\left\{\int_{t}^{T} \phi(\tilde{r}_{s},s) ds\right\}\right], 
\]
where
\[
 \phi(r,t):=\frac{1}{2}\bar{\sigma}^{2}(r)\left(T-t\right)^{2}-\left(T-t\right)\left(\bar{\mu}(r)-\bar{\sigma}(r)\lambda(r)\right),
\]
$(\tilde{W}_{s}, t \leq s \leq T)$ is a Brownian motion with respect to $\tilde{P}$ and
\[
d\tilde{r}_{s}=\left[\bar{\mu}(\tilde{r}_{s})-\bar{\sigma}(\tilde{r}_{s})\lambda(\tilde{r}_{s})-\bar{\sigma}^{2}(\tilde{r}_{s})(T-t)\right]ds +\bar{\sigma}(\tilde{r}_{s}) d\tilde{W}_{s},\qquad \tilde{r}_{t}=r.
\]
Moreover, note that $\phi$ is bounded and Lipschitz continuous function. It means that $F$ is bounded and bounded away from zero.
\end{remark}

To prove other properties of $F$ we need two more lemmas. 

\begin{lemma} \label{uniform-r}
 Suppose that process $(h_{t}, 0 \leq t \leq T)$ with deterministic starting point is given by
\[
 dh(t)=\zeta_{1}(t)dt + \zeta_{2}(t) d\tilde{W}_{t},  
\]
where $\zeta_{1}$ and $\zeta_{2}$ are  bounded stochastic processes and $(\tilde{W}_{t}, 0 \leq t \leq T)$ is a Brownian motion with respect to $\tilde{P}$. Then  
\[
\mathbb{E}^{\tilde{P}} \left[\sup_{0 \leq t \leq T} \exp \left\{ \int_{0}^{t}h(s) ds \right\}\right] < +\infty.
\]
\end{lemma}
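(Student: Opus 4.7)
The plan is to split $h$ into a deterministic part plus a martingale and reduce the claim to an exponential moment estimate for the supremum of that martingale. Writing $h(t)=h(0)+\int_0^t \zeta_1(s)\,ds + M_t$ with $M_t:=\int_0^t \zeta_2(s)\,d\tilde W_s$ and using bounds $|\zeta_1|\le C_1$, $|\zeta_2|\le C_2$ together with Fubini, I would obtain a pathwise estimate of the form
\[
\int_0^t h(s)\,ds \;\le\; |h(0)|T + C_1 T^2 + T\sup_{0\le s\le T}|M_s|,
\]
so that $\sup_{0\le t\le T}\exp\{\int_0^t h(s)\,ds\}$ is bounded by a deterministic constant times $\exp\{T\sup_{0\le s\le T}|M_s|\}$. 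It then suffices to show
\[
\mathbb{E}^{\tilde P}\!\left[\exp\bigl(T\sup_{0\le s\le T}|M_s|\bigr)\right]<+\infty.
\]

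For this, I would exploit the boundedness of the integrand, which forces $\langle M\rangle_T \le C_2^2 T$. The cleanest route is Dambis--Dubins--Schwarz: on a (possibly enlarged) probability space $M_t=\tilde B_{\langle M\rangle_t}$ for some $\tilde P$-Brownian motion $\tilde B$, hence
\[
\sup_{0\le t\le T}|M_t| \;\le\; \sup_{0\le u\le C_2^2 T}|\tilde B_u|,
\]
and the right-hand side has Gaussian tails (by the reflection principle, $\tilde P(\sup_{0\le u\le c}|\tilde B_u|>\lambda)\le 2\exp(-\lambda^2/(2c))$), so every exponential moment is finite. Alternatively, Novikov (applied to $\alpha M$ for any $\alpha$, since $\langle M\rangle_T$ is bounded) makes $\exp(\alpha M_t-\tfrac{\alpha^2}{2}\langle M\rangle_t)$ a true martingale, and a standard Doob maximal-inequality argument combined with the same bound on $\langle M\rangle_T$ yields $\mathbb{E}^{\tilde P}[\exp(\alpha\sup_{0\le t\le T} M_t)]<\infty$ for every $\alpha>0$; the symmetric estimate for $-M_t$ gives the supremum of $|M_t|$.

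Combining the two steps produces the required integrability. The only delicate point is the exponential moment estimate for the supremum of the stochastic integral; once one observes that $\zeta_2$ bounded implies $\langle M\rangle_T$ bounded, this reduces to a classical Gaussian-tail fact for bounded time-changed Brownian motion, so I do not foresee a genuine obstacle, only a careful bookkeeping of the constants.
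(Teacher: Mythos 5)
Your proof is correct, but it takes a genuinely different route from the paper's. The paper applies the product rule to $t\,h(t)$ to write $\int_0^t h(s)\,ds = t\,h(t) - \int_0^t s\zeta_1(s)\,ds - \int_0^t s\zeta_2(s)\,d\tilde W_s$, splits on the sign of $h(t)$ to replace $t\,h(t)$ by $0$ or $T h(t)$, and then observes that the two resulting exponentials are solutions of linear stochastic equations with bounded coefficients, whose running suprema are integrable by standard moment estimates. You instead integrate the equation for $h$ directly, bound $\int_0^t h(s)\,ds$ pathwise by a deterministic constant plus $T\sup_{0\le s\le T}|M_s|$ with $M_s=\int_0^s\zeta_2(u)\,d\tilde W_u$, and reduce the lemma to the exponential integrability of the running maximum of a continuous martingale with bounded quadratic variation, obtained via Dambis--Dubins--Schwarz and the reflection principle (or, equivalently, Novikov plus Doob). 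Your reduction is more transparent about where the integrability really comes from --- the sub-Gaussian tail of $\sup|M|$ forced by $\langle M\rangle_T\le C_2^2T$ --- whereas the paper's version delegates that fact to the black box of $\mathbb{E}\sup$-estimates for linear SDEs; both arguments ultimately rest on the same Gaussian-tail phenomenon. The only cosmetic slip is the constant in your reflection-principle tail bound ($2$ rather than $4$ when passing from $\tilde B$ to $|\tilde B|$), which is immaterial for the finiteness of exponential moments.
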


\begin{proof}
Note that 
\[
d(t h(t))= (h(t)+t\zeta_{1}(t))dt + t\zeta_{2}(t) d\tilde{W}_{t},
\]
which can be rewritten into 
\[
th(t)=\int_{0}^{t}h(s)ds+\int_{0}^{t} s\zeta_{1}(s) ds +\int_{0}^{t} s\zeta_{2}(s) d\tilde{W}_{s}. 
\]
It means that
\[
\mathbb{E}^{\tilde{P}} \left[\sup_{0 \leq t \leq T} \exp \left\{ \int_{0}^{t}h(s) ds \right\}\right] = \mathbb{E}^{\tilde{P}} \left[\sup_{0 \leq t \leq T}  \exp \left\{ th(t)-\int_{0}^{t} s\zeta_{1}(s) ds -\int_{0}^{t} s\zeta_{2}(s) d\tilde{W}_{s}  \right\}\right].
\]
Moreover, we have
\[
\sup_{0 \leq t \leq T}  \exp \left\{ th(t)-\int_{0}^{t} s\zeta_{1}(s) ds-\int_{0}^{t} s\zeta_{2}(s) d\tilde{W}_{s} \right\}
\]
\begin{align*}
\leq&\sup_{0 \leq t \leq T} \mathbbm{1}_{\{h(t)<0\}} \exp \left\{ th(t)-\int_{0}^{t} s\zeta_{1}(s) ds -\int_{0}^{t} s\zeta_{2}(s) d\tilde{W}_{s}  \right\}\\
&+ \sup_{0 \leq t \leq T} \mathbbm{1}_{\{h(t) \geq 0\}} \exp \left\{ th(t)-\int_{0}^{t} s\zeta_{1}(s) ds -\int_{0}^{t} s\zeta_{2}(s) d\tilde{W}_{s}\right\}\\
\leq &\sup_{0 \leq t \leq T}  \exp \left\{-\int_{0}^{t} s\zeta_{1}(s) ds -\int_{0}^{t} s\zeta_{2}(s) d\tilde{W}_{s}  \right\} \\
&+ \sup_{0 \leq t \leq T}  \exp \left\{ Th(t)-\int_{0}^{t} s\zeta_{1}(s) ds -\int_{0}^{t} s\zeta_{2}(s) d\tilde{W}_{s}\right\}.
\end{align*}
This concludes the proof since both processes under supremum are solutions to linear equations with bounded coefficients.
\end{proof}

\begin{lemma} \label{derivative}
Suppose $\bar{\mu}$, $\bar{\sigma}$ and $\bar{\sigma}\cdot\lambda$ are Lipschitz continuous and bounded, $\bar{\sigma}^{2}> \varepsilon > 0$ and $F$ is a bounded solution to equation  (\ref{final_final}). Then the first and the second $r$-derivative of $F$ are bounded.
\end{lemma}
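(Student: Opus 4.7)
The plan is to exploit the Feynman--Kac representation given in Remark~\ref{important_remark},
\[
F(r,t)=\mathbb{E}^{\tilde{P}}_{r,t}\left[\exp\left\{\int_{t}^{T}\phi(\tilde{r}_{s},s)\,ds\right\}\right],
\]
together with classical dependence of the SDE flow $r\mapsto\tilde{r}_{s}^{r,t}$ on its initial condition. Since the drift $\bar{\mu}(r)-\bar{\sigma}(r)\lambda(r)-\bar{\sigma}^{2}(r)(T-t)$ and the diffusion $\bar{\sigma}(r)$ are Lipschitz in $r$ uniformly in $t\in[0,T]$, a standard Gronwall argument yields, for every $p\geq 1$,
\[
\mathbb{E}^{\tilde{P}}\left[\sup_{t\leq s\leq T}|\tilde{r}_{s}^{r_{1},t}-\tilde{r}_{s}^{r_{2},t}|^{p}\right]\leq C_{p}|r_{1}-r_{2}|^{p},
\]
with $C_{p}$ depending only on $T$ and on the Lipschitz and sup-norm constants of the coefficients.

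For the first derivative, Remark~\ref{important_remark} already records that $\phi$ is bounded and Lipschitz in $r$ (uniformly in $t$). Using $|e^{a}-e^{b}|\leq e^{\|\phi\|_{\infty}T}|a-b|$ together with the flow estimate above gives
\[
|F(r_{1},t)-F(r_{2},t)|\leq e^{\|\phi\|_{\infty}T}\,L_{\phi}\int_{t}^{T}\mathbb{E}^{\tilde{P}}|\tilde{r}_{s}^{r_{1},t}-\tilde{r}_{s}^{r_{2},t}|\,ds\leq K|r_{1}-r_{2}|,
\]
with $K$ independent of $(r,t)$. Since Remark~\ref{important_remark} provides $F\in\mathcal{C}^{2,1}(\mathbb{R}\times[0,T))$, the pointwise derivative $F_{r}$ exists throughout the domain and necessarily satisfies $|F_{r}|\leq K$.

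For the second derivative, the natural move is to differentiate the Feynman--Kac representation twice in $r$, obtaining formally
\[
F_{rr}(r,t)=\mathbb{E}^{\tilde{P}}_{r,t}\left[e^{\int_{t}^{T}\phi\,ds}\left(\Bigl(\int_{t}^{T}\phi_{r}\,\tilde{r}'_{s}\,ds\Bigr)^{2}+\int_{t}^{T}\bigl(\phi_{rr}(\tilde{r}'_{s})^{2}+\phi_{r}\tilde{r}''_{s}\bigr)\,ds\right)\right],
\]
where $\tilde{r}'_{s}=\partial_{r}\tilde{r}_{s}^{r,t}$ and $\tilde{r}''_{s}=\partial^{2}_{r}\tilde{r}_{s}^{r,t}$ solve linear SDEs with bounded coefficients, hence have uniformly bounded $L^{p}$-norms on $[t,T]$. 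Provided $\phi_{r},\phi_{rr}$ are bounded, this representation combined with boundedness of $\exp\int\phi\,ds$ yields the required uniform bound on $F_{rr}$.

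The chief obstacle is the gap between the Lipschitz hypothesis assumed on $\bar{\mu}$, $\bar{\sigma}$, $\bar{\sigma}\lambda$ and the classical differentiability implicitly invoked in the second-derivative computation. The standard remedy is to mollify these coefficients, derive the estimates for the smoothed problem with constants depending only on the Lipschitz and sup-norm bounds of the original data (and on $T$), and then pass to the limit by stability of Feynman--Kac solutions under uniform convergence of the coefficients on compacts; the uniform bound on $F_{rr}$ is preserved in the limit because the controlling constants are independent of the mollification parameter.
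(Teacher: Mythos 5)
Your treatment of the first derivative is correct and is essentially the paper's own argument: Lipschitz continuity of $\phi$ and of the flow $r\mapsto\tilde{r}_{s}^{r,t}$ in $L^{1}(\sup)$-norm, combined with the local Lipschitz property of the exponential on $(-\infty,\|\phi\|_{\infty}T]$, gives a Lipschitz bound on $F(\cdot,t)$ uniform in $t$, hence $|F_{r}|\leq K$.

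The second-derivative argument, however, has a genuine gap. Your formula for $F_{rr}$ involves $\phi_{rr}$ and the second derivative $\tilde{r}''_{s}$ of the flow, and the SDE satisfied by $\tilde{r}''_{s}$ has coefficients containing second derivatives of $\bar{\mu}$, $\bar{\sigma}$ and $\bar{\sigma}\lambda$. None of these quantities is controlled by the hypotheses, which only give Lipschitz constants, i.e.\ bounds on \emph{first} derivatives. The mollification step does not repair this: if you mollify a Lipschitz function at scale $\delta$, the first derivative stays bounded by the Lipschitz constant, but the second derivative is only $O(1/\delta)$, so the constants entering your representation of $F_{rr}$ are \emph{not} independent of the mollification parameter and do not survive the limit $\delta\to 0$. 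The paper avoids differentiating the Feynman--Kac formula twice altogether. It first bounds $F_{t}$: writing $\Gamma(r,t)=e^{-r(T-t)}F(r,t)$, which solves the simpler equation (\ref{eqgamma}) with potential $-r$, one gets the representation $F(r,T-t)=e^{rt}\,\mathbb{E}^{\tilde{P}}_{r,0}[\exp\{-\int_{0}^{t}\tilde{r}_{s}\,ds\}]$ in which the time variable appears only as the upper limit of integration; differentiating under the expectation (justified by Lemma \ref{uniform-r}) bounds $F_{t}$. Then, since $F$, $F_{r}$, $F_{t}$ and all zero- and first-order coefficients of (\ref{final_final}) are bounded and $\bar{\sigma}^{2}>\varepsilon>0$, one simply solves the PDE for $F_{rr}=-\tfrac{2}{\bar{\sigma}^{2}}\bigl(F_{t}+(\cdots)F+(\cdots)F_{r}\bigr)$ and reads off the bound. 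You should replace your second-derivative computation by an argument of this type (a bound on $F_{t}$ plus uniform ellipticity), or else strengthen the hypotheses to include bounded second derivatives of the coefficients.
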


\begin{proof} 
To get a bound for $F_{r}$ it is sufficient to estimate the Lipschitz constant. First of all, note that for $r_{1},r_{2} \in (- \infty, a]$ there exists $L_{a}>0$ such that 
\begin{equation}\label{property}
|e^{r_{1}}-e^{r_{2}}| \leq L_{a}|r_{1}-r_{2}|.
\end{equation}
Secondly using (\ref{property}) and the fact that from Remark \ref{important_remark} function $\phi(r,t)$ is bounded and Lipschitz continuous we obtain existence of $L>0$ that
\begin{align*}
|F(r,t)-F(\bar{r},t)| \leq&  L\mathbb{E}^{\tilde{P}}\left[ \int_{t}^{T}\left|\tilde{r}_{s}(r,t)-\tilde{r}_{s}(\bar{r},t)\right| ds\right]\\ 
\leq& L T \mathbb{E}^{\tilde{P}} \left[\sup_{t \leq s \leq T} \left|\tilde{r}_{s}(r,t)-\tilde{r}_{s}(\bar{r},t)\right|\right],
\end{align*}
where from notational covenience we wrote $\mathbb{E}^{\tilde{P}}f(\tilde{r}_{s}(r,t))$ instead of $\mathbb{E}^{\tilde{P}}_{r,t}f(\tilde{r}_{s})$. Now, it is well known (Theorem 1.3.16 from Pham \cite{Pham2}) there exists $C_{T}>0$ such that 
\[
\mathbb{E}^{\tilde{P}} \left[\sup_{t \leq s \leq T} \left|\tilde{r}_{s}(r,t)-\tilde{r}_{s}(\bar{r},t)\right|\right] \leq C_{T}|r-\bar{r}|.
\]

To prove boundness of $F_{rr}$ we first estimate $F_{t}$ and use the fact that $F$ is a solution to equation (\ref{final_final}). Let us recall that
\[
\Gamma(r,t)=e^{-r(T-t)}F(r,t)
\]
is a solution to (\ref{eqgamma}). Suppose that $t \leq T$ is fixed and define function
\[
 v(r,k)=\Gamma(r,k+T-t),\qquad k\in[0,t].
\]
It is straightforward that $v$ is a bounded solution to (\ref{eqgamma}) but with terminal condition $v(r,t)=1$. Lemma \ref{uniform-r} ensures that Feynman-Kac representation is possible i.e.
\[
 v(r,0)=\mathbb{E}_{r,0}^{\tilde{P}}\left[ \exp\left\{- \int_{0}^{t}\tilde{r}_{s} ds \right\}\right]=
e^{-rt} \mathbb{E}_{r,0}^{\tilde{P}}\left[ \exp\left\{- \int_{0}^{t}g(s) ds \right\}\right], 
\]
where $(\tilde{W}_{s}, 0 \leq s \leq t)$ is a Brownian motion with respect to $\tilde{P}$,
\[
d\tilde{r}_{s}=\left(\bar{\mu}(\tilde{r}_{s})-\bar{\sigma}(\tilde{r}_{s})\lambda(\tilde{r}_{s})\right)dt +\bar{\sigma}(\tilde{r}_{s}) d\tilde{W}_{s},\qquad \tilde{r}_{0}=r
\]
 and 
 \[g(t)= \int_{0}^{t} \left(\bar{\mu}(\tilde{r}_{s})-\bar{\sigma}(\tilde{r}_{s})\lambda(\tilde{r}_{s})\right)ds + \int_{0}^{t}\bar{\sigma}(\tilde{r}_{s}) d\tilde{W}_{s}.
 \] 
Now, note that 
\[
F(r,T-t)=e^{rt}v(r,0),
\]
so after differentiating $F$ with respect to $t$ we get
\[
|F_{t}(r,T-t)| \leq  \mathbb{E}_{r,0}^{\tilde{P}} \left[ |g(t)| \exp\left\{-\int_{0}^{t} g(s) ds\right\}\right].
\]
Differentiation is possible since 
\[
\mathbb{E}_{r,0}^{\tilde{P}} \left[\sup_{0 \leq t \leq T} \left( |g(t)| \exp\left\{-\int_{0}^{t} g(s) ds\right\}\right)\right]
\]
\[
\leq \frac{1}{2} \mathbb{E}_{r,0}^{\tilde{P}} \left[\sup_{0 \leq t \leq T}  |g(t)|^{2}\right] + \frac{1}{2}\mathbb{E}_{r,0}^{\tilde{P}} \left[\sup_{0 \leq t \leq T}  \exp\left\{-2 \int_{0}^{t} g(s) ds\right\}\right]
\]
and from Lemma \ref{uniform-r} we know that 
\[
\mathbb{E}_{r,0}^{\tilde{P}} \left[\sup_{0 \leq t \leq T}\exp\left\{-2 \int_{0}^{t} g(s) ds\right\}\right] < +\infty.
\]
\end{proof}

Now we are ready to consider the second equation (\ref{second_equation}). Note that  
\begin{equation}\label{fraction}
\frac{H_{r}}{H}=-\frac{\Gamma_{r}}{\Gamma}=(T-t)-\frac{F_{r}}{F}
\end{equation}
and equation (\ref{second_equation}) has the following form 
\begin{equation}\label{second_equation_2}
G_{t}+\frac{1}{2}\bar{\sigma}^{2}(r)G_{rr}+\left(\lambda(r)+\bar{\sigma}(r)\left[(T-t) - \frac{F_{r}}{F}\right]\right)^{2}G
\end{equation}
\[
+\left(\bar{\mu}(r)-2\bar{\sigma}(r)\lambda(r)-2\bar{\sigma}^{2}(r)\left[(T-t) - \frac{F_{r}}{F}\right]\right)G_{r}=0.
\]

\begin{remark}\label{im_re} 
Under conditions of Lemma \ref{derivative} we get boundness and Lipschitz continuity of (\ref{fraction}). This ensures that  there exists  a classical (class $\mathcal{C}^{2,1}(\mathbb{R} \times [0,T)) \cap\mathcal{C}(\mathbb{R} \times [0,T]) $) unique solution to \eqref{second_equation_2}, which satisfy Feynman-Kac representation:
\[
G(r,t)=\mathbb{E}_{r,t}^{\hat{P}}\left[ \exp\left\{\int_{t}^{T} \psi(\hat{r}_{s},s) ds\right\}\right], 
\]
where 
\[
 \psi(r,t)=\left(\lambda(r)+\bar{\sigma}(r)\left[(T-t) - \frac{F_{r}}{F}\right]\right)^{2},
\]
$(\hat{W}_{s}, t \leq s \leq T)$ is a Brownian motion with respect to $\hat{P}$ and 
\[
d\hat{r}_{s}=\left(\bar{\mu}(\hat{r}_{s})-2\bar{\sigma}(\hat{r}_{s})\lambda(\hat{r}_{s})-2\bar{\sigma}^{2}(\hat{r}_{s})\left[(T-s) - \frac{F_{r}}{F}\right]\right)ds +\bar{\sigma}(\hat{r}_{s}) d\hat{W}_{s},\qquad \hat{r}_{t}=r.
\]
At the end it is worth noticing that since  $\psi$ is Lipschitz continuous in $r$ (uniformly wrt. $t$) and bounded, then function $G$ is bounded, bounded away from zero and first $r$-derivative of $G$ is bounded (see proof of Lemma \ref{derivative}).  
\end{remark} 
\end{section}

\section{Final solution}\label{sec:6}

\begin{theorem} \label{final}
Suppose $\bar{\sigma}^{2}>\varepsilon>0$ and $\bar{\mu}$, $\bar{\sigma}$ and $\bar{\sigma}\cdot\lambda$ are Lipschitz continuous and bounded. Then for each initial conditions $(x_{0},y_{0},r_{0},t_{0})$ there exists a Markov saddle point
\begin{equation}\label{mmmarkov}
(\pi^{*},\eta^{*})\in \mathcal{A}_{x_{0},y_{0},r_{0},t_{0}} \times\mathcal{M}
\end{equation}
for problem (\ref{problem1}) such that 
\begin{align*}
\pi^{*}&=2Y_{t}^{\eta^{*}}G\left[\frac{\lambda(r_{t})}{\sigma(r_{t},t)}\frac{1}{H}+\frac{\bar{\sigma}(r_{t})}{\sigma(r_{t},t)}\left(\frac{H_{r}}{H^{2}}-\frac{G_{r}}{GH}\right)\right]-X_{t}^{\pi^{*}}\frac{\bar{\sigma}(r_{t})}{\sigma(r_{t},t)}\frac{H_{r}}{H},\\
\eta^{*}&=-\lambda(r_{t})-\bar{\sigma}(r_{t})\frac{H_{r}}{H},
\end{align*}
where $G$ and $H$ are a classical unique solutions to (\ref{first_equation}) and (\ref{second_equation}) respectively.
\end{theorem}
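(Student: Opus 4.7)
The plan is to apply the Verification Theorem \ref{verification_theorem} to the candidate value function $V(x,y,r,t):=H(r,t)x+G(r,t)y$, where $H$ and $G$ are the classical solutions to \eqref{first_equation} and \eqref{second_equation} produced in Section \ref{sec:5}. Remarks \ref{important_remark} and \ref{im_re} give precisely the regularity and the two-sided boundedness required: $V\in\mathcal{C}^{2,2,2,1}$, and the coefficients $H_{r}/H$, $G_{r}/G$, together with $\lambda=(\bar{\sigma}\lambda)/\bar{\sigma}$ (using $\bar{\sigma}\geq\sqrt{\varepsilon}$ and boundedness of $\bar{\sigma}\lambda$), are globally bounded. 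Lemma \ref{minimax_equalities_lemma} then delivers the four HJBI relations \eqref{first:in1}--\eqref{terminal:cond1} for the feedback pair $(\pi^{*},\eta^{*})$ of \eqref{saddle_point_candidate}, so only the admissibility of the candidate and the integrability bound \eqref{uniform} remain.

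I would handle the easy half first. Since $\eta^{*}=-\lambda(r_{t})-\bar{\sigma}(r_{t})H_{r}/H$ is a bounded Markov feedback, $\mathcal{E}\bigl(\int\eta^{*}dW\bigr)$ is a square-integrable $P$-martingale by Novikov and a standard exponential estimate, so $\eta^{*}\in\mathcal{M}$. For \eqref{uniform}, boundedness of $H$ and $G$ gives
\[
|V(X^{\pi}_{s},Y^{\eta}_{s},r_{s},s)|\leq C\bigl(|X^{\pi}_{s}|+Y^{\eta}_{s}\bigr).
\]
The first term is integrable by Definition \ref{defi1}; for the second I would apply Doob's $L^{2}$ inequality to the $P$-martingale $Y^{\eta}$ (using $\mathbb{E}(dQ^{\eta}/dP)^{2}<\infty$) and then a Cauchy--Schwarz bound against $dQ^{\eta}/dP\in L^{2}(P)$ to obtain $\mathbb{E}^{\eta}\bigl[\sup_{s}Y^{\eta}_{s}\bigr]<\infty$.

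The most delicate step is showing $\pi^{*}\in\mathcal{A}_{x_{0},y_{0},r_{0},t_{0}}$, because the feedback \eqref{pistar} contains the unobservable factor $Y^{\eta^{*}}$. This is where Lemma \ref{lem_reduction} is essential. Substituting the identity
\[
2G(r_{t},t)Y^{\eta^{*}}_{t}=2G(r_{0},t_{0})y_{0}+H(r_{0},t_{0})x_{0}-H(r_{t},t)X^{\pi^{*}}_{t}
\]
into \eqref{pistar} eliminates $Y^{\eta^{*}}$ and produces an affine Markov feedback $\pi^{*}_{t}=A(r_{t},t)X^{\pi^{*}}_{t}+B(r_{t},t)$ with Borel coefficients $A,B$ that are bounded by the estimates of Section \ref{sec:5}. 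Plugging this feedback into \eqref{wealth} yields a \emph{linear} SDE for $X^{\pi^{*}}$ whose coefficients are bounded except for the drift term $r_{t}x$; since $\bar{\mu}$ and $\bar{\sigma}$ are bounded, $r_{t}$ has finite exponential moments of every order, so a routine Gr\"onwall argument gives $\mathbb{E}^{P}\bigl[\sup_{s}|X^{\pi^{*}}_{s}|^{p}\bigr]<\infty$ for all $p\geq 1$, and a final Cauchy--Schwarz upgrade (against $dQ^{\eta}/dP\in L^{2}$) produces the required bound under every $Q^{\eta}$.

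Combining these ingredients, Theorem \ref{verification_theorem} applies and yields that $(\pi^{*},\eta^{*})$ is a Markov saddle point whose value is $V(x_{0},y_{0},r_{0},t_{0})=H(r_{0},t_{0})x_{0}+G(r_{0},t_{0})y_{0}$. The anticipated main obstacle is precisely the admissibility of $\pi^{*}$: without the identification in Lemma \ref{lem_reduction} the feedback is not in closed form on the observable state $(X^{\pi^{*}},r)$, and the linearisation it provides is what makes both the existence of the controlled wealth process and its moment estimates tractable.
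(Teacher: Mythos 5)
Your overall strategy coincides with the paper's: apply Theorem \ref{verification_theorem} to $V=Hx+Gy$, use Lemma \ref{minimax_equalities_lemma} for the HJBI inequalities, and invoke Lemma \ref{lem_reduction} to eliminate the unobservable factor $Y^{\eta^{*}}$ from the feedback. Your treatment of $\eta^{*}\in\mathcal{M}$ and of condition \eqref{uniform} for arbitrary admissible pairs is fine. The gap lies in the execution of the one genuinely delicate step, the admissibility of $\pi^{*}$. After substituting the identity of Lemma \ref{lem_reduction} into \eqref{pistar} you obtain the feedback $\hat{\pi}^{*}$ of Remark \ref{reduction2} and assert that its affine coefficients $A(r,t)$, $B(r,t)$ are bounded. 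They are not: the inhomogeneous part $B$ carries a factor $1/H(r,t)=-e^{-(T-t)r}F(r,t)$, and since $F$ is bounded away from zero this blows up as $r\to-\infty$ for $t<T$. Section \ref{sec:5} gives boundedness of the \emph{ratio} $H_{r}/H$, not of $1/H$ itself. Consequently the wealth equation you write down is a linear SDE with an unbounded multiplicative drift coefficient ($r_{t}$) \emph{and} unbounded additive terms, and the ``routine Gr\"onwall argument'' does not go through: controlling $\mathbb{E}\left[\,|X_{t}|^{p}\,r_{t}\,\right]$ by $C\,\mathbb{E}|X_{t}|^{p}$ is exactly what an unbounded $r_{t}$ forbids, so you would be pushed back to the explicit variation-of-constants representation and to exponential moment estimates for $r_{t}$ --- none of which appear in your argument.

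The paper avoids the SDE for $X^{\pi^{*}}$ altogether. It first bounds $\mathbb{E}^{\eta}\left[\sup_{s}\left|G(r_{s},s)Y_{s}^{\eta^{*}}\right|\right]$ using the fact that $Y^{\eta^{*}}$ solves the linear SDE \eqref{aaabbbccc} with \emph{bounded} coefficients, combined with Cauchy--Schwarz against $dQ^{\eta}/dP\in L^{2}(P)$; by the identity \eqref{3} this immediately controls $\sup_{s}\left|H(r_{s},s)X_{s}^{\pi^{*}}\right|$, which settles \eqref{1}. To obtain \eqref{2} it divides by $H$, writes $1/H=-e^{-(T-t)r}F$, and uses the separately proven estimate $\mathbb{E}^{\eta}\left[\sup_{t}e^{w(t)r_{t}}\right]<+\infty$ for deterministic continuous $w$ (itself derived by noting that $e^{r_{t}}$ solves a linear SDE with bounded coefficients, splitting on the sign of $r_{t}$, and applying H\"older). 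If you wish to keep your route through the wealth SDE, you must import precisely this exponential moment estimate and replace Gr\"onwall by the explicit solution formula; as written, the admissibility step is not justified.
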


\begin{proof}
It follows from Remark \ref{important_remark}, Lemma \ref{derivative} and Remark \ref{im_re} that there exist a classical unique solutions to (\ref{first_equation}) and (\ref{second_equation}), which are bounded, bounded away from zero and have first $r$-derivatives bounded. 

If we set
\[
 V(x,y,r,t):=H(r,t)x+G(r,t)y,
\]
then it is sufficient to check whether function $V$ and Markov saddle point (\ref{mmmarkov}) satisfy all conditions of the Verification Theorem. Due to calculations (\ref{upper_isaacs}) - (\ref{second_equation}) and Lemma \ref{minimax_equalities_lemma}, conditions (\ref{first:in1}) - (\ref{terminal:cond1}) are fulfilled. Now, we only have to prove that $(\pi^{*},\eta^{*})$ belongs to the set $\mathcal{A}_{x_{0},y_{0},r_{0},t_{0}} \times\mathcal{M}$ and condition (\ref{uniform}) holds, so we have to show that for any $\eta \in \mathcal{M}$
 \begin{equation} \label{1}
 \mathbb{E}_{x_{0},y_{0},r_{0},t_{0}}^{\eta} \left[ \sup_{t_{0} \leq s \leq T} \left|V(X_{s}^{\pi^{*}},Y_{s}^{\eta^{*}},r_{s},s)\right|\right] < + \infty
\end{equation}
and
\begin{equation} \label{2}
 \mathbb{E}^{\eta}_{x_{0},y_{0},r_{0},t_{0}} \left[\sup_{t_{0} \leq s \leq T} |X_{s}^{\pi^{*}}|\right] < +\infty.
\end{equation}
First of all, let us remind that $Y^{\eta^{*}}_{t}$ is a solution to equation
\begin{equation}\label{aaabbbccc}
dY^{\eta^{*}}_{t} =-\left\{\lambda(r_{t})+\bar{\sigma}(r_{t})\frac{H_{r}}{H}\right\}Y^{\eta^{*}}_{t} dW_{t}.
\end{equation}
Since $G$ is bounded and (\ref{aaabbbccc}) is a linear stochastic differential equation with bounded coefficients, we have
\begin{equation}\label{4}
 \mathbb{E}_{x_{0},y_{0},r_{0},t_{0}}^{\eta} \left[ \sup_{t_{0} \leq s \leq T} \left|G(r_{s},s)Y_{s}^{\eta^{*}}\right|\right]
\end{equation}
\[
 \leq \sqrt{\mathbb{E}^{P}\left(\frac{dQ^{\eta}}{dP}\right)^{2}}\sqrt{\mathbb{E}_{x_{0},y_{0},r_{0},t_{0}}^{P}\sup_{t_{0} \leq s \leq T} \left|G(r_{s},s)Y_{s}^{\eta^{*}}\right|^2} < + \infty\qquad  \forall\eta \in \mathcal{M}.
\]

To prove the same with $H(r_{t},t)X_{t}^{\pi^{*}}$ we use Lemma \ref{lem_reduction}. Namely, let us remind that for fixed initial conditions $(x_{0},y_{0},r_{0},t_{0})$ we have
\begin{equation}\label{3}
H(r_{t},t)X^{\pi^{*}}_{t}=2G(r_{0},t_{0})y_{0}+H(r_{0},t_{0})x_{0}-2G(r_{t},t)Y^{\eta^{*}}_{t}.
\end{equation}
Since $G$ and $Y^{\eta^{*}}_{t}$ satisfy (\ref{4}), then 
\[
 \mathbb{E}_{x_{0},y_{0},r_{0},t_{0}}^{\eta} \left[ \sup_{t_{0} \leq s \leq T} \left|H(r_{s},s)X_{s}^{\eta^{*}}\right|\right]< + \infty\qquad \forall\eta \in \mathcal{M}
\]
and condition (\ref{1}) holds.

To prove (\ref{2}) first we show that for any deterministic continuous function $w(t)$ and for any $\eta \in  \mathcal{M}$ we have
\[
 \mathbb{E}_{r_{0},t_{0}}^{\eta}\left[\sup_{t_{0} \leq t \leq T} e^{w(t)r_{t}}\right]  < + \infty,
\]
where $(r_{t},t_{0}\leq t\leq T)$ is a stochastic process given by
\[
dr_{t}=\bar{\mu}(r_{t}) dt + \bar{\sigma}(r_{t}) dW_{t}.
\]
It is easy to see, that
\[
r_{t}=r_{0}+\int_{t_{0}}^{t}\bar{\mu}(r_{s}) ds+\int_{t_{0}}^{t}\bar{\sigma}(r_{s}) dW_{s}
\]
and consequently, since $e^{r_{t}}$ is a solution to linear stochastic equation with bounded coefficients, we have
\begin{equation}\label{one_1}
\mathbb{E}_{r_{0},t_{0}}^{P} \left[\sup_{t_{0} \leq t \leq T} e^{r_{t}}\right]<+\infty.
\end{equation}
Now, observe that
\begin{align*}
 \sup_{t_{0} \leq t \leq T} e^{w(t)r_{t}}  &\leq \sup_{t_{0} \leq t \leq T} \mathbbm{1}_{\{ r_{t}<0\}}  e^{w(t)r_{t}} + \sup_{t_{0} \leq t \leq T} \mathbbm{1}_{\{ r_{t} \geq 0\}}  e^{w(t)r_{t}}  \\ &\leq \sup_{t_{0} \leq t \leq T} \mathbbm{1}_{\{ r_{t}<0\}}  e^{\underline{w}r_{t}} + \sup_{t_{0} \leq t \leq T} \mathbbm{1}_{\{ r_{t} \geq 0\}}  e^{\overline{w}r_{t}} \\
 &\leq \sup_{t_{0} \leq t \leq T}   e^{\underline{w}r_{t}} + \sup_{t_{0} \leq t \leq T}  e^{\overline{w}r_{t}},
\end{align*}
where
\[
\underline{w}=\min_{t_{0} \leq t \leq T} w(t)\qquad\text{and}\qquad \overline{w}=\max_{t_{0} \leq t \leq T} w(t).
\]
Thus, using H\"{o}lder inequality and (\ref{one_1}), we obtain
\[
\mathbb{E}_{r_{0},t_{0}}^{\eta} \left[\sup_{t_{0} \leq t \leq T} e^{w(t)r_{t}}\right]\leq \sqrt{\mathbb{E}^{P} \left(  \frac{dQ^{\eta}}{dP} \right)^{2}}  \sqrt{\mathbb{E}_{r_{0},t_{0}}^{P} \left[\sup_{t_{0} \leq t \leq T} e^{2w(t)r_{t}}\right]}<+\infty \qquad \forall\eta\in\mathcal{M}.
\]

Finally, if we divide equation (\ref{3}) by  $H(r_{t},t)$ and remembering that
\[
\frac{1}{H(r,t)}=-e^{-(T-t)r}F(r,t),
\]
$F$ is bounded, $G$ and $Y^{\eta^{*}}_{t}$ satisfy (\ref{4}) and
\[
\mathbb{E}_{r_{0},t_{0}}^{\eta} \left[\sup_{t_{0} \leq t \leq T} e^{-(T-t)r_{t}}\right]<+\infty\qquad \forall\eta\in\mathcal{M},
\]
 we get that (\ref{2}) holds.  
\end{proof}

\section{Example - Vasicek model}\label{sec:7} 

In this section we solve problem (\ref{problem1}) in the Vasicek model. Namely, we assume that $\sigma(t)>0$ is a continuous function, $\alpha>0$, $\lambda$, $\bar{\theta}$, $\bar{\sigma}\in\mathbb{R}$ and
\[
\lambda(r)=\lambda,\qquad\sigma(r,t)=\sigma(t),\qquad \bar{\mu}(r)=\bar{\theta}-\alpha r,\qquad \bar{\sigma}(r)=\bar{\sigma}.
\]
Since $\bar{\mu}$ is not bounded, the solution is not a straightforward consequence of Theorem \ref{final} and needs a separate proof. 
\begin{theorem}
Suppose that initial conditions $(x_{0},y_{0},r_{0},t_{0})$ are fixed. In the Vasicek model the saddle point
\begin{equation}\label{saddle_point_vasicek}
\left(\pi^{*},\eta^{*}(\pi^{*})\right)\in\mathcal{A}_{x_{0},y_{0},r_{0},t_{0}}\times\mathcal{M}
\end{equation}
looks as follows
\begin{align*}
\pi^{*}&=\frac{2Y_{t}^{\eta^{*}}}{\sigma(t)}\left(\lambda+\bar{\sigma}B_{1}(t)\right)\frac{A_{2}(t)}{A_{1}(t)}e^{-B_{1}(t)r_{t}}-X_{t}^{\pi^{*}}\frac{\bar{\sigma}}{\sigma(t)}B_{1}(t),\\
\eta^{*}(\pi^{*})&=-\lambda-\bar{\sigma}B_{1}(t),
\end{align*}
where
\begin{align}
B_{1}(t)&=\frac{1}{\alpha}\left(1-e^{-\alpha(T-t)}\right),\label{B1}\\
A_{1}(t)&=-\exp\left\{\int_{t}^{T}\left(\bar{\theta}-\bar{\sigma}\lambda\right)B_{1}(s)-\frac{1}{2}\bar{\sigma}^{2}B^{2}_{1}(s) \ ds\right\},\label{A1}\\
A_{2}(t)&=-\exp\left\{\int_{t}^{T}\left(\lambda+\bar{\sigma}B_{1}(s)\right)^{2} ds\right\}.\label{A2}
\end{align}
\end{theorem}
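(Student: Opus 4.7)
The plan is to follow the same verification scheme as in Theorem \ref{final}: construct $H$ and $G$ explicitly, verify they solve the semilinear equations (\ref{first_equation})--(\ref{second_equation}), and then check the admissibility hypotheses of the Verification Theorem. Since the Vasicek drift $\bar{\mu}(r) = \bar{\theta} - \alpha r$ is not bounded in $r$, the existence results of Section \ref{sec:5} do not apply and one has to proceed by hand.

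First I would postulate the affine ansatz $H(r,t) = A_1(t) e^{B_1(t) r}$ together with an $r$-independent $G(r,t) = A_2(t)$, motivated by the formulas for $\pi^*$ and $\eta^*$ in the statement (note that $\eta^*$ does not depend on $r$, which forces $H_r/H$ to depend only on $t$, and the $y$-coefficient of $\pi^*$ factors through $e^{-B_1(t) r}$, which forces $G_r/G \equiv 0$). Substituting into (\ref{first_equation}) and matching coefficients of $r$ yields the linear ODE $B_1'(t) = \alpha B_1(t) - 1$ with $B_1(T)=0$, whose solution is (\ref{B1}), together with a scalar ODE for $\log|A_1|$ whose solution is (\ref{A1}); in particular $A_1<0$ so $H<0$. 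Substituting $G(r,t) = A_2(t)$ into (\ref{second_equation}) and using $H_r/H = B_1(t)$ reduces the PDE to $A_2'(t) + (\lambda + \bar{\sigma} B_1(t))^2 A_2(t) = 0$ with $A_2(T)=-1$, whose solution is (\ref{A2}). This yields smooth classical solutions of (\ref{first_equation})--(\ref{second_equation}) that are bounded and bounded away from zero on $[t_0,T]$.

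Setting $V(x,y,r,t) := H(r,t) x + G(r,t) y$, Lemma \ref{minimax_equalities_lemma} guarantees conditions (\ref{first:in1})--(\ref{terminal:cond1}) of the Verification Theorem. Moreover $\eta^*(t) = -\lambda - \bar{\sigma} B_1(t)$ is a bounded deterministic function, so its Dol\'eans-Dade exponential is square-integrable and $\eta^* \in \mathcal{M}$. To verify condition (\ref{uniform}) at the $G$-term, note that $G \equiv A_2$ is a deterministic bounded function and $Y^{\eta^*}$ solves the linear SDE $dY_t^{\eta^*} = -(\lambda+\bar{\sigma}B_1(t))Y_t^{\eta^*}\,dW_t$ with bounded deterministic coefficient; hence all $P$-moments of $\sup_s |Y_s^{\eta^*}|$ are finite and Cauchy--Schwarz together with $\mathbb{E}[(dQ^\eta/dP)^2] < \infty$ completes the estimate. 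Lemma \ref{lem_reduction} then transfers the bound to $H(r_t,t) X_t^{\pi^*}$, which yields (\ref{uniform}).

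The main obstacle is showing $\mathbb{E}^\eta \sup_{t_0 \le s \le T} |X_s^{\pi^*}| < \infty$ for every $\eta \in \mathcal{M}$, since we can no longer rely on boundedness of $\bar{\mu}$. Dividing the identity of Lemma \ref{lem_reduction} by $H(r_t,t) = A_1(t) e^{B_1(t) r_t}$ reduces the task to controlling $\mathbb{E}^\eta \sup_{t_0 \le s \le T} e^{-B_1(s) r_s}$. Under $P$ the process $r_t$ is one-dimensional Ornstein--Uhlenbeck, hence a continuous Gaussian process with uniformly bounded mean and variance on the compact interval $[t_0,T]$, so a direct estimate gives $\mathbb{E}^P \sup_{t_0 \le s \le T} e^{c r_s} < \infty$ for every constant $c$. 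A splitting on $\{r_s < 0\}$ and $\{r_s \geq 0\}$ identical to the argument at the end of the proof of Theorem \ref{final} reduces the continuous-coefficient case to the two constant-coefficient extremes, and a final Cauchy--Schwarz with $dQ^\eta/dP \in L^2(P)$ transfers the bound to $Q^\eta$, completing the verification.
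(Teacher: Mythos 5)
Your proposal is correct and follows essentially the same route as the paper: the affine ansatz $H=A_{1}(t)e^{B_{1}(t)r}$, $G=A_{2}(t)$, the resulting ODEs for $B_{1},A_{1},A_{2}$, and the admissibility check via Lemma \ref{lem_reduction}, the split into $\underline{w},\overline{w}$ and Cauchy--Schwarz. The only (harmless) deviation is in justifying $\mathbb{E}^{P}\bigl[\sup_{t_{0}\le s\le T}e^{cr_{s}}\bigr]<\infty$: you invoke Gaussian tail bounds for the Ornstein--Uhlenbeck process directly, whereas the paper passes to $\delta_{t}=e^{\alpha t}r_{t}$ so that $e^{\delta_{t}}$ solves a linear SDE with bounded coefficients on $[t_{0},T]$; both arguments are valid.
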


\begin{proof}
Let
\begin{equation}\label{22}
H(r,t)=A_{1}(t)e^{B_{1}(t)r},\qquad \text{where}\qquad A_{1}(T)=-1,\qquad B_{1}(T)=0.
\end{equation}
Then equation (\ref{first_equation}) is of the form
\[
A'_{1}(t)+rA_{1}(t)\left[B'_{1}(t)-\alpha B_{1}(t)+1\right]+\left(\bar{\theta}-\bar{\sigma}\lambda\right)A_{1}(t)B_{1}(t)-\frac{1}{2}\bar{\sigma}^{2}A_{1}(t)B_{1}^{2}(t)=0.
\]
If $B_{1}(t)$ looks as (\ref{B1}), then we get
\[
A'_{1}(t)+A_{1}(t)\left[\left(\bar{\theta}-\bar{\sigma}\lambda\right)B_{1}(t)-\frac{1}{2}\bar{\sigma}^{2}B_{1}^{2}(t)\right]=0,
\]
so $A_{1}(t)$ is of the form (\ref{A1}). Now using (\ref{22}) we can rewrite equation (\ref{second_equation}) as
\begin{equation}\label{33}
G_{t}+\frac{1}{2}\bar{\sigma}^{2}G_{rr}+\left(\lambda+\bar{\sigma}B_{1}(t)\right)^{2}G+\left(\bar{\theta}-\alpha r-2\bar{\sigma}\lambda-2\bar{\sigma}^{2}B_{1}(t)\right)G_{r}=0.
\end{equation}
Let
\begin{equation}\label{44}
G(r,t)=A_{2}(t),\qquad \text{where}\qquad A_{2}(T)=-1.
\end{equation}
Then equation (\ref{33}) looks as 
\[
A'_{2}(t)+A_{2}(t)\left(\lambda+\bar{\sigma}B_{1}(t)\right)^{2}=0,
\]
so $A_{2}(t)$ is of the form (\ref{A2}). Finally, saddle point candidate (\ref{saddle_point_candidate}) for $H$ and $G$ given by (\ref{22}) and (\ref{44}) respectively looks as (\ref{saddle_point_vasicek}). 

Now we can set 
\[
 V(x,y,r,t):=H(r,t)x+G(r,t)y,
\]
and check whether $\left(\pi^{*},\eta^{*}(\pi^{*})\right)$ belongs to the set $\mathcal{A}_{x_{0},y_{0},r_{0},t_{0}} \times\mathcal{M}$ and condition (\ref{uniform}) holds. Taking into account the form of $H$ and $G$
and the fact that
\[
H(r_{t},t)X^{\pi^{*}}_{t}=2G(r_{0},t_{0})y_{0}+H(r_{0},t_{0})x_{0}-2G(r_{t},t)Y^{\eta^{*}}_{t},
\]
it is sufficient to prove only that for any deterministic continuous function $w(t)$ and for any $\eta\in\mathcal{M}$ we have
\[
 \mathbb{E}_{r_{0},t_{0}}^{\eta}\left[\sup_{t_{0} \leq t \leq T} e^{w(t)r_{t}}\right]  < + \infty,
\]
where $(r_{t},t_{0}\leq t\leq T)$ is a stochastic process given by
\[
dr_{t}=(\bar{\theta}-\alpha r_{t})dt + \bar{\sigma}dW_{t}.
\]
If we define 
\[
\delta_{t}:=e^{\alpha t} r_{t},
\]
then 
\[
d\delta_{t}=\bar{\theta} e^{\alpha t} dt + \bar{\sigma} e^{\alpha t} dW_{t},
\]
so 
\[
\delta_{t}= e^{\alpha t_{0}} r_{t_{0}} +  \int_{t_{0}}^{t}\bar{\theta} e^{\alpha s} ds + \int_{t_{0}}^{t}\bar{\sigma}e^{\alpha s} dW_{s}
\]
and note that
\begin{equation}\label{warunek_a}
 \mathbb{E}_{r_{0},t_{0}}^{P}\left[\sup_{t_{0} \leq t \leq T} e^{\delta_{t}}\right]<+\infty.
\end{equation}
Using the same method as in proof of Theorem \ref{final} we obtain
\begin{equation}\label{ccc}
 \sup_{t_{0} \leq t \leq T} e^{w(t)\delta_{t}}  \leq \sup_{t_{0} \leq t \leq T}   e^{\underline{w}\delta_{t}} + \sup_{t_{0} \leq t \leq T}  e^{\overline{w}\delta_{t}},
\end{equation}
where
\[
\underline{w}=\min_{t_{0} \leq t \leq T} w(t)\qquad\text{and}\qquad \overline{w}=\max_{t_{0} \leq t \leq T} w(t).
\]
Finally, taking into account (\ref{warunek_a}) and (\ref{ccc}), we get
\[
\mathbb{E}_{r_{0},t_{0}}^{\eta} \left[\sup_{t_{0} \leq t \leq T} e^{w(t)r_{t}}\right]= \mathbb{E}_{r_{0},t_{0}}^{\eta} \left[\sup_{t_{0} \leq t \leq T} \exp\left\{w(t)e^{-\alpha t}\delta_{t}\right\} \right] 
\]
\[
\leq \sqrt{\mathbb{E}^{P} \left(  \frac{dQ^{\eta}}{dP} \right)^{2}}  \sqrt{\mathbb{E}_{r_{0},t_{0}}^{P} \left[\sup_{t_{0} \leq t \leq T} \exp\left\{2w(t)e^{-\alpha t}\delta_{t}\right\}\right]}<+\infty\qquad \forall\eta\in\mathcal{M}.
\]
\end{proof}

\end{document}